\documentclass[11pt]{article}
\usepackage{amssymb,amsfonts,color,amsthm,amsmath}
\usepackage[english]{babel}
\usepackage {hyperref}
\usepackage{graphicx}

\newcommand{\N}{\mathbb{N}}
\newcommand{\Z}{\mathbb{Z}}
\newcommand{\R}{\mathbb{R}}

\newcommand{\supp}{\text{supp}}
\newcommand{\norm}[1]{\left|\left|#1\right|\right|}

\voffset=-2cm
\hoffset=-2.25cm
\textheight=21cm
\textwidth=17.25cm

\newtheorem{theorem}{Theorem}
\newtheorem{lemma}{Lemma}
\newtheorem{remark}{Remark}
\newtheorem{proposition}[lemma]{Proposition}

\author{Matthew West}
\title{A Pair of Non-Isometric Potentials With the Same Semiclassical Invariants}

\begin{document}
\maketitle
\begin{abstract}
We show that there exist pairs of non-isometric potentials for the 1D semiclassical Schr\"odinger operator whose spectra agree up to $O(h^\infty)$, yet their corresponding eigenvalues differ no less than exponentially.  This result was conjectured by Guillemen and Hezari in \cite{guillemin2012hezari}, where they prove a very similar result, yet cannot remove the possibility of a subsequence $h_k\to 0$ where the ground state eigenvalues may agree.  
\end{abstract}

\section{Introduction}
Consider the semiclassical Schr\"odinger operator $H=-h^2\frac{d^2}{dx^2}+V(x)$, with $h>0$, $V\in C^\infty(\R)$, and $\lim_{|x|\to\infty}V(x)=\infty$.   Schr\"odinger operators of this form admit discrete spectra, with corresponding rank one eigenspaces.  We denote the eigenvalues $E_0<E_1<E_2<\cdots$ and corresponding $L^2$ normalized eigenfunctions $\psi_0,\psi_1,\psi_2,\cdots$.  Note that both the eigenvalues and eigenfunctions depend on the semiclassical parameter $h$,  we will omit this dependence in our notation.  We say the spectra of semiclassical operators $H$ and $H'$ agree up to order $h^\infty$, $\text{spec}(H)=\text{spec}(H')+O(h^\infty)$, if for each $E_k\in \text{spec}(H)$ and $E_k'\in\text{spec}(H')$, and for each $N\in\N$ there exist constants $C_{k,N}$ such that $|E_k-E_k'|\leq C_{k,N}h^N$ for $h>0$.  With this, we are ready to state our main result (note that the potentials of Theorem \ref{theorem:mainResult} are plotted in Figure \ref{fig:Vpm}).

\begin{theorem}
\label{theorem:mainResult}
There exist pairs of non-isometric potentials $V^\pm(x)\in C^\infty(\R)$ with $V(x)\geq 0$ defining Schr\"odinger operators $H^\pm=-h^2\frac{d^2}{dx^2}+V^\pm(x)$ whose semiclassical spectra agree modulo $O(h^\infty)$, yet their eigenvalues  $E_j^\pm$ differ for all $h>0$, and $j\in\Z_{\geq0}$.  Moreover, $D_je^{-\frac{d_j}{h}}\leq E_j^--E_j^+\leq C_je^{-\frac{c_j}{h}}$, for constants $c_j,d_j,C_j,D_j>0$.
\end{theorem}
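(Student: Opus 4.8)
The plan is to take $V^+\in C^\infty(\R)$ to be a single, asymmetric, non-degenerate well --- $V^+\geq 0$, $V^+(0)=0$, $(V^+)''(0)>0$, $V^+$ strictly decreasing on $(-\infty,0)$ and strictly increasing on $(0,\infty)$, $V^+$ not even, $V^+(x)\to\infty$ --- and to set $V^-=V^++\chi$, where $\chi\in C_c^\infty(\R)$ is nonnegative, not identically zero, and supported in a fixed interval $[b-\rho,b+\rho]\subset(0,\infty)$ lying strictly behind the barrier, i.e. with $V^+\geq E^*:=V^+(b-\rho)>0$ on $\supp\chi$. Both potentials have the same unique non-degenerate global minimum, so any isometry $\phi$ of $\R$ with $V^-=V^+\circ\phi$ must fix $0$ and hence equal $\pm\mathrm{id}$; $+\mathrm{id}$ is excluded because $\chi\not\equiv0$, and $-\mathrm{id}$ because, $V^+$ being non-even, $V^+(x)\neq V^+(-x)=V^-(x)$ for some $x\notin\supp\chi\cup(-\supp\chi)$. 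So $V^\pm$ are smooth, nonnegative, confining and non-isometric. The role of the barrier is that for each fixed $j$ one has $E_j^\pm\to 0<E^*$ as $h\to 0$, so that for small $h$ the bump lies in a classically forbidden region at positive Agmon distance from the well, which is what will make the perturbation $\chi$ invisible modulo $O(h^\infty)$ while still being detectable.

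To make this precise and, crucially, to pin down the sign, I would interpolate: for $t\in[0,1]$ put $V_t=V^++t\chi$ and $H_t=-h^2\frac{d^2}{dx^2}+V_t$, an analytic family of type (A) with discrete simple spectrum (Sturm--Liouville theory). Then $E_j(t)$ and the real normalized eigenfunctions $\psi_j(t)$ depend real-analytically on $t$, and Feynman--Hellmann gives $\frac{d}{dt}E_j(t)=\int_\R\chi\,\psi_j(t)^2\,dx$. Since $\psi_j(t)$ is a nonzero solution of a second-order linear ODE it cannot vanish identically on the open set $\{\chi>0\}$, so $\frac{d}{dt}E_j(t)>0$ for every $t$ and every $h>0$; integrating, $E_j^-=E_j(1)>E_j(0)=E_j^+$ for \emph{all} $h>0$ and all $j$, which is exactly the step that removes the exceptional subsequence left open in \cite{guillemin2012hezari}. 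Furthermore $E_j^--E_j^+=\int_0^1\!\int_\R\chi\,\psi_j(t)^2\,dx\,dt\leq\norm{\chi}_\infty\sup_{t\in[0,1]}\norm{\psi_j(t)}_{L^2(\supp\chi)}^2$, and uniform Agmon/Helffer--Sj\"ostrand decay estimates along the compact family $\{V_t\}$ bound the last quantity by $C_j e^{-c_j/h}$ for $0<h\leq h_0$; this yields the upper bound and, in particular, $\mathrm{spec}(H^+)=\mathrm{spec}(H^-)+O(h^\infty)$.

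The remaining step, and the one I expect to carry most of the technical weight, is the matching lower bound, which I would obtain from a two-sided tunneling estimate for $\psi_j(t)$ on $\supp\chi$, uniform in $t\in[0,1]$ and $h\in(0,h_0]$. Fix $a'\in(0,b-\rho)$ beyond all $j$ zeros of $\psi_j(t)$ (these cluster at $0$ for small $h$); since all $V_t$ coincide with $V^+$ on $[0,a']$, the behaviour of $\psi_j(t)$ there is $t$-independent up to exponentially small corrections, and in particular $\psi_j(t,a')\geq De^{-\varepsilon_0/h}$ uniformly in $t$, with $\varepsilon_0$ as small as desired. For $h$ small, $[a',\infty)$ is entirely classically forbidden (as $V_t\geq V^+>0$ there) and $\psi_j(t)>0$ on it; the equation $h^2\partial_x^2\psi_j(t)=(V_t-E_j(t))\psi_j(t)$ then makes $\psi_j(t)$ convex there, which together with $\psi_j(t)\in L^2$ forces $\psi_j(t)$ to be non-increasing on all of $[a',\infty)$. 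Writing $w=h\,\partial_x\psi_j(t)/\psi_j(t)\leq 0$ and using $h\,\partial_x w=(V_t-E_j(t))-w^2$, a one-dimensional comparison argument gives $w\geq-K_j$ on $[a',b+\rho]$, where $K_j^2$ bounds $V_t-E_j(t)$ there (a dip below $-K_j$ would send $w\to-\infty$ at finite distance, i.e. force a zero of $\psi_j(t)$). Hence $\partial_x\log\psi_j(t)\geq-K_j/h$ on $[a',b+\rho]$, so $\psi_j(t,x)\geq De^{-\varepsilon_0/h}e^{-K_j(b+\rho-a')/h}$ on $\supp\chi$, giving $\norm{\psi_j(t)}_{L^2(U)}^2\geq D_j'e^{-2d_j/h}$ uniformly in $t$, where $U\subset\supp\chi$ is an interval on which $\chi\geq\delta>0$. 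Feeding this into $E_j^--E_j^+=\int_0^1\!\int_\R\chi\,\psi_j(t)^2\geq\delta\int_0^1\norm{\psi_j(t)}_{L^2(U)}^2\,dt$ produces $E_j^--E_j^+\geq D_je^{-d_j/h}$. Beyond this, the only items to check are standard: the semiclassical localization of the zeros of $\psi_j(t)$ uniformly in $t$, and the uniform Agmon upper bounds used above; the genuine new ingredient is simply the choice of a one-signed, barrier-hidden perturbation, which turns the monotonicity of the eigenvalues in $t$ into the strict, never-vanishing gap $E_j^->E_j^+$.
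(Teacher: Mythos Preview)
Your proposal is correct and shares the analytic backbone with the paper --- Hadamard/Feynman--Hellmann along a one-parameter family $H_t$, combined with two-sided Agmon-type bounds on $\psi_j(t)$ in the forbidden region --- but the \emph{construction} is genuinely different, and this changes where the work lies.

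The paper takes $V^\pm(x)=x^2+\alpha(x)+\beta(\pm x)$ with $\supp\alpha\subset(1,2)$ and $\supp\beta\subset(3,4)$: the \emph{same} bump $\beta$ is placed on opposite sides of the well, and $\alpha$ is an extra one-sided bump whose sole purpose is to make the Agmon distance from $0$ to $(3,4)$ strictly larger than the Agmon distance from $0$ to $(-4,-3)$. The sign of $E_j^--E_j^+$ is therefore not obvious a priori; it is extracted only after proving sharp two-sided eigenfunction bounds (their Proposition~\ref{prop:eigBound}) and playing the $(1+\delta)^2$-lower-bound on the short side against the $(1-\delta)^2$-upper-bound on the long side. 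Their lower eigenfunction bound is obtained via an explicit comparison sub-solution (Lemma~\ref{lemma:boundEndpoint}) iterated over a Darboux partition.

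Your construction $V^-=V^++\chi$ with $\chi\geq0$ sidesteps this entirely: $\frac{d}{dt}E_j(t)=\int\chi\,\psi_j(t)^2>0$ by unique continuation alone, so $E_j^->E_j^+$ for every $h>0$ and every $j$ with no semiclassical input whatsoever. This is strictly cleaner for the ``differ for all $h>0$'' clause (the paper's argument as written only gives it for small $h$). Your Riccati comparison for the lower bound is a coarser substitute for their Lemma~\ref{lemma:boundEndpoint}, yielding a non-sharp $d_j$, but the theorem only asks for \emph{some} constants. What the paper's construction buys in return is a pair $V^\pm$ whose classical phase portraits enclose equal areas at \emph{every} energy (Figure~\ref{fig:Hpm}), so the $O(h^\infty)$ agreement follows from Bohr--Sommerfeld/quantum Birkhoff normal forms and connects directly to the Penrose--Lifshits analogy motivating the paper; in your construction the sublevel sets of $V^\pm$ differ above the barrier, and you recover $O(h^\infty)$ only a posteriori from the exponential upper bound on $E_j^--E_j^+$ (which is fine, but less structural). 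The one point in your sketch that deserves a little more care is the uniform-in-$t$ lower bound $\psi_j(t,a')\geq De^{-\varepsilon_0/h}$: this is the content of the paper's Lemmas~\ref{lemma:L2conv}--\ref{lemma:boundClassical} and does require an argument, not just ``$V_t$ agree on $[0,a']$''.
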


\begin{figure}[h]
\centering
\includegraphics[trim={3.3cm 7cm 2.5cm 7cm},clip,width=0.35\textwidth]{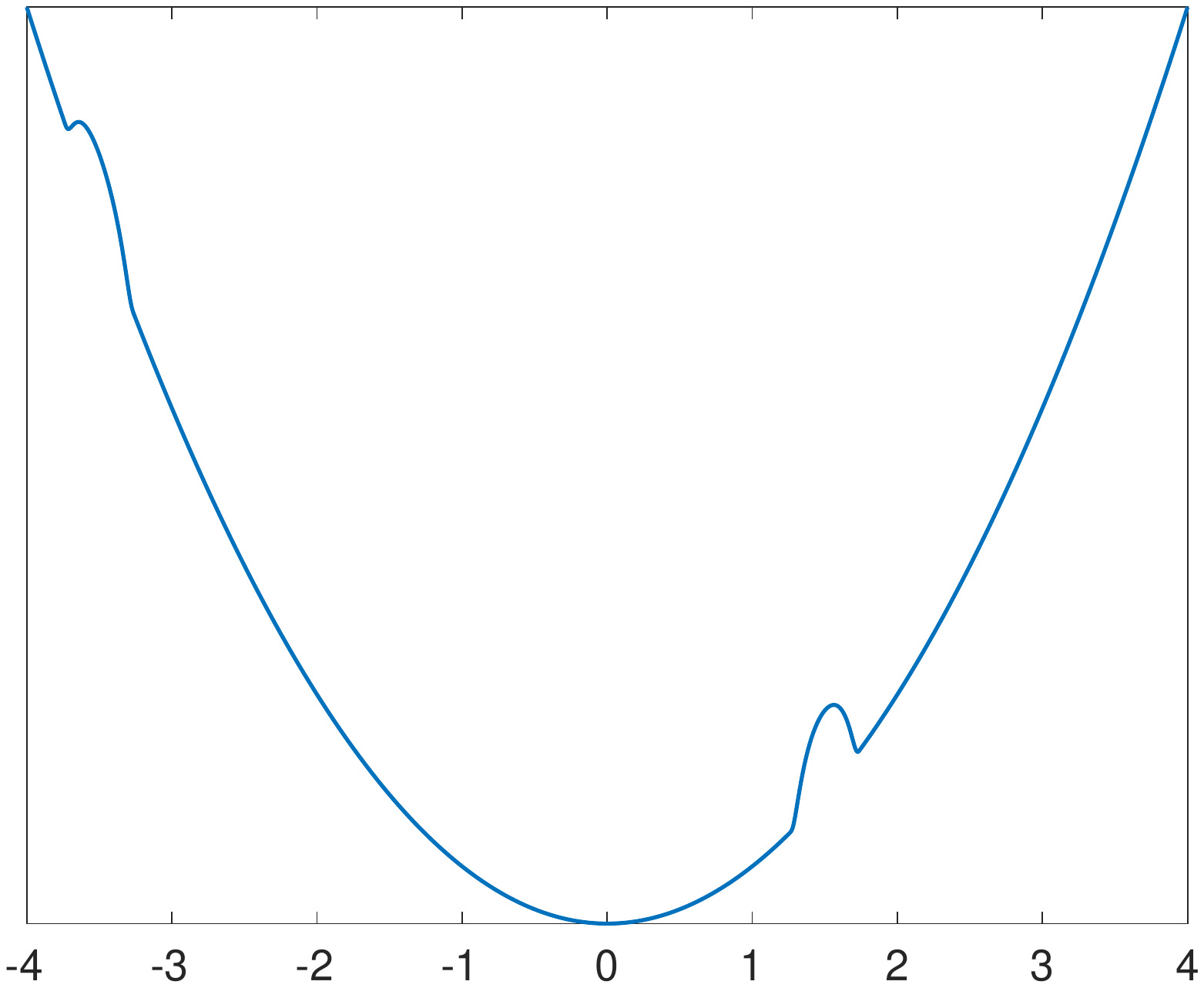}
\includegraphics[trim={3.3cm 7cm 2.5cm 7cm},clip,width=0.35\textwidth]{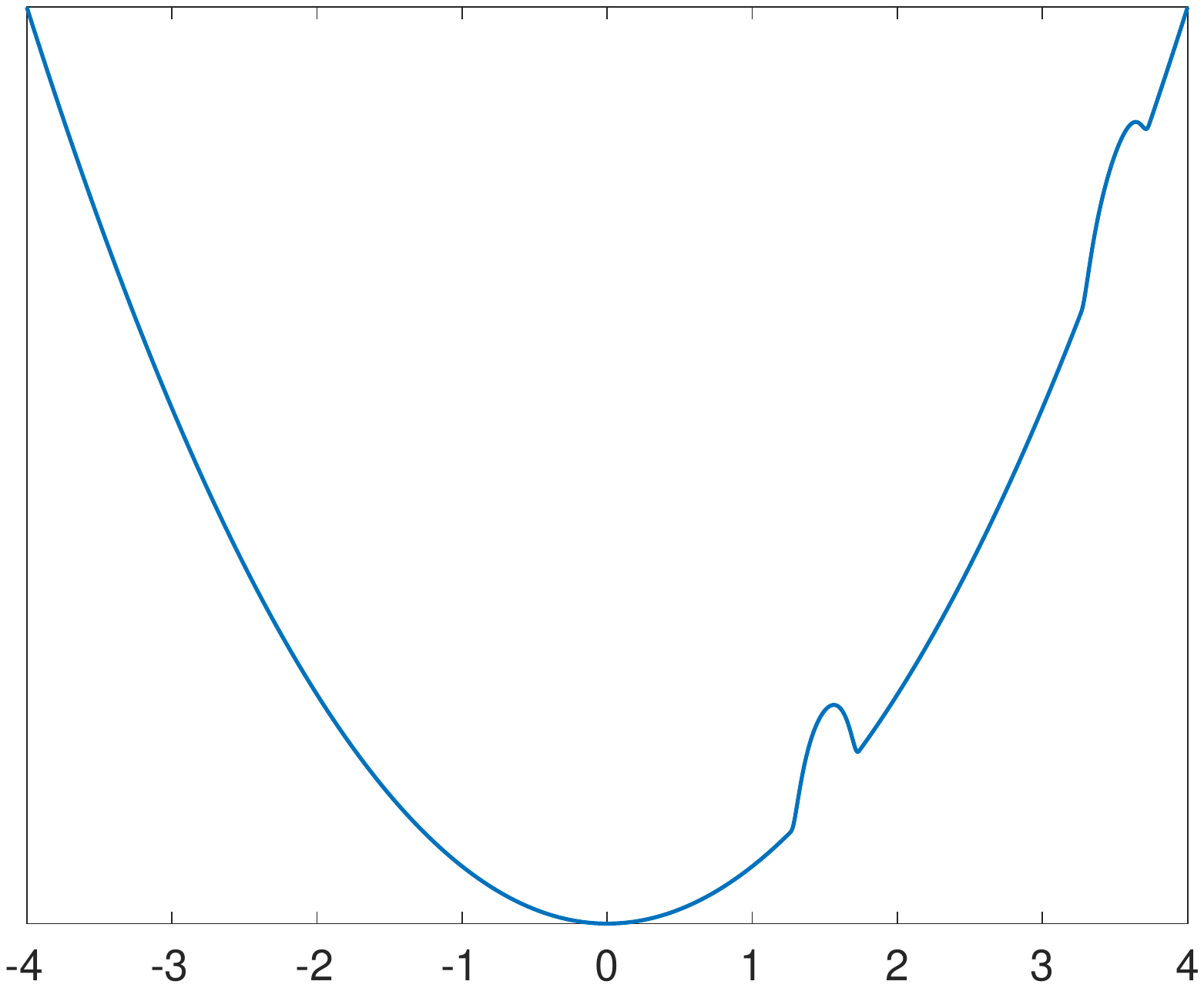}
\caption{Two potential functions $V^\pm$ whose corresponding semiclassical spectra agree up to $O(h^\infty)$, yet whose eigenvalues are disctinct ($V^-$ left, $V^+$ right).}
\label{fig:Vpm}
\end{figure}

The above theorem was conjectured by Guillmen and Hezari in \cite{guillemin2012hezari}, where they establish a similar  result for the ground state eigenvalues, except they cannot rule out the possibility of a subsequence $h_k\to0$ where the ground states agree.  Moreover, they do not produce a lower bound for the difference of ground state eigenvalues. Their methods rely on the Kato-Rellich theorem, which guarantees the analyticity of the ground state eigenvalues in $h$, to extend the nonequality of the ground state potentials at $h=1$, for all $h>0$ except for possibly a sequence $h_k\to 0$.  Our methods are distinct from the previously mentioned paper, we produce an explicit exponential lower bound for the difference of ground state eigenvalues by adapting the Agmon bounding methods employed by Simon in \cite{simon1984semiclassical}.  

To motivate our results, we consider the inverse spectral problem for the semiclassical Schr\"odinger operator: if the spectra of two semiclassical Schr\"odinger operators agree, are their potential functions isometric?  Some positive results exist, yet the methods employed can only distinguish spectra up order $O(h^N)$, for $N$ finite or $\infty$.  To account for the coarseness of these method one makes restrictions on the classes of potentials.  Datchev-Hezari-Ventura \cite{datchev2010spectral} use  order $o(h^2)$ semiclassical trace invariants to show that radial monotonic potentials are spectrally determined amongst all other potentials.  When working with analytic potentials with a unique global minimum at the origin, order $O(h^\infty)$ knowledge of the low lying eigenvalues determine the potential, see \cite{guillemin2005uribe} and \cite{hezari2009inverse}.    Colin de Verdiere in \cite{verdiere2011semi} studies the single well potential with a symmetry condition using order $o(h^2)$ techniques, and conjectures that the symmetry condition is necessary.  The main result of this paper serves as a counterexample which establishes the necessity of the symmetry condition of \cite{verdiere2011semi}.   

The corresponding limitation for the Laplacian over planar domains was initially conjectured by Zelditch in \cite{zelditch2004inverse}, and established by Fulling-Kuchment in \cite{fulling2005coincidence}.  Here, two non-isometric planar domains with the same wave trace invariants were show to exist.  The so called Penrose-Lifshits mushroom domains in Figure \ref{fig:PenroseLifshits} have been shown to posses the property that $\text{Tr}(\cos(t\sqrt{\Delta_\Omega}))-\text{Tr}(\cos(t\sqrt{\Delta_{\Omega'}}))\in C^\infty(\R)$, that is the wave traces possess the same singular structure, rendering wave trace methods unable to distinguish these domains.  
\begin{figure}[h]
\centering
\includegraphics[width=0.5\textwidth]{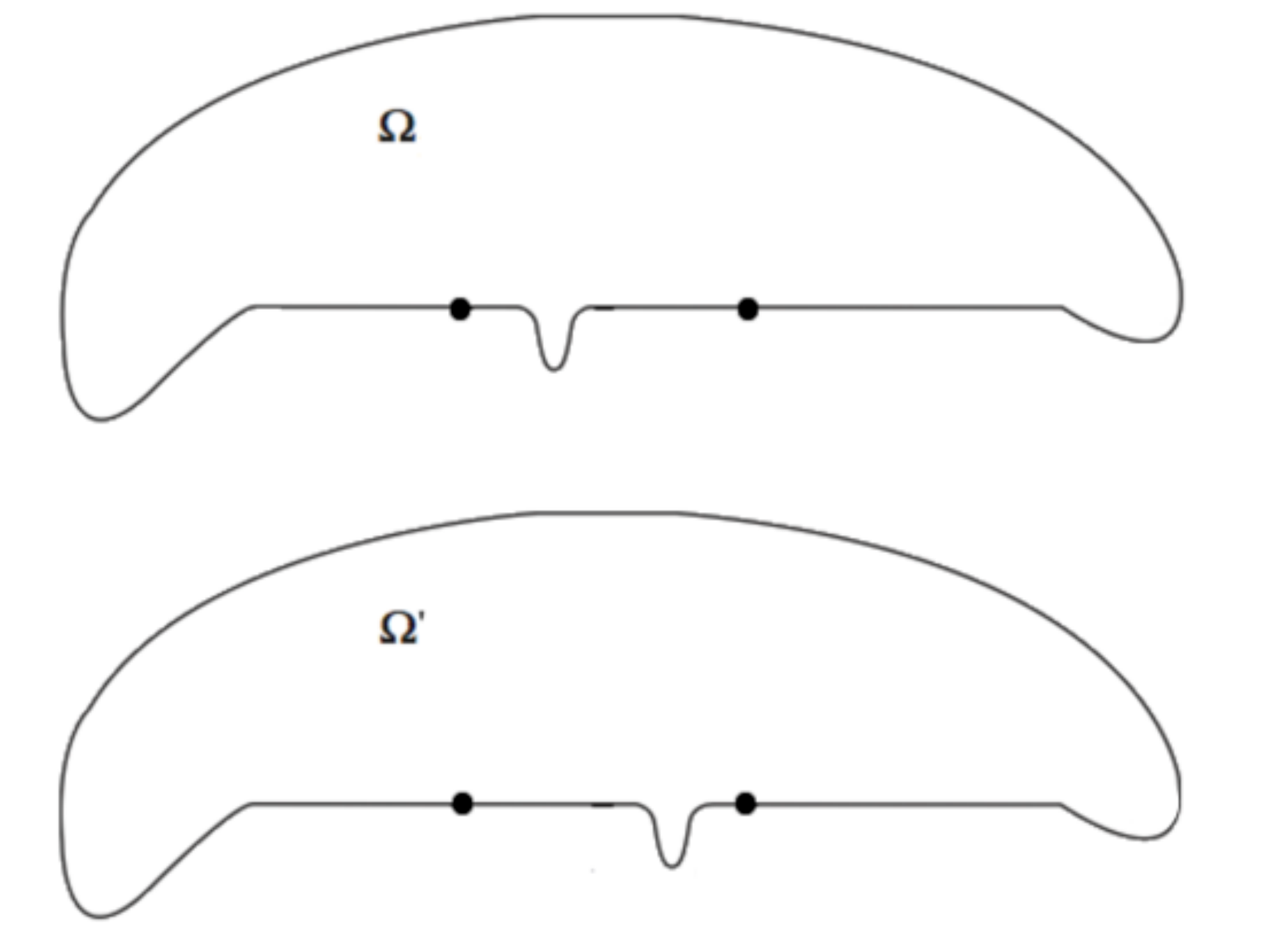}
\caption{Two non-isometric domains with the same wave trace invariants.}
\label{fig:PenroseLifshits}
\end{figure}

I would like to thank my advisor Hamid Hezari for the many insightful conversations and the suggestion to study this problem.

\section{Exponential Estimates for Eigenfunctions}
In this section we will restrict our attention to perturbations of the semiclassical harmonic oscillator:
\[
	H=-h^2\frac{d^2}{dx^2}+x^2+\gamma(x),\quad\gamma\in C_0^\infty(\R\setminus\{0\})
\]
and denote the semiclassical harmonic oscillator $H_0=-h^2\frac{d^2}{dx^2}+x^2$.  We will develop upper and lower locally uniform estimates for the ground states of these operators, which we will ultimately transfer to eigenvalue bounds using Hadamard's variational formula.  Theorem \ref{theorem:mainResult} will follow from a judicious choice of $\gamma$, analogous to the domains shown in Figure \ref{fig:PenroseLifshits}.

To begin we require the following characterization of the perturbed harmonic oscillator spectra,
\[
	\text{spec}(H)=\text{spec}(H_0)+O(h^\infty)=\{h,3h,5h,7h,\dots\}+O(h^\infty).
\]
Using the methods of quantom-birkhoff normal forms at the bottom of a potential well established by Sj\"ostrand \cite{sjostrand1992semi}, or Borh-Sommerfeld quantization to all orders given by Colin de Verdiere \cite{de2005bohr}, one can see that the above holds.  Moreover, any two perturbations of the harmonic oscillator have spectra which agree up to $O(h^\infty)$. This fact is underpinned by the observation that the level sets of the Hamiltonians corresponding to $H^\pm$ enclose the same area, see Figure \ref{fig:Hpm}.

\begin{figure}[h]
\centering
\includegraphics[trim={4.5cm 8.5cm 4cm 7.5cm},clip,width=0.35\textwidth]{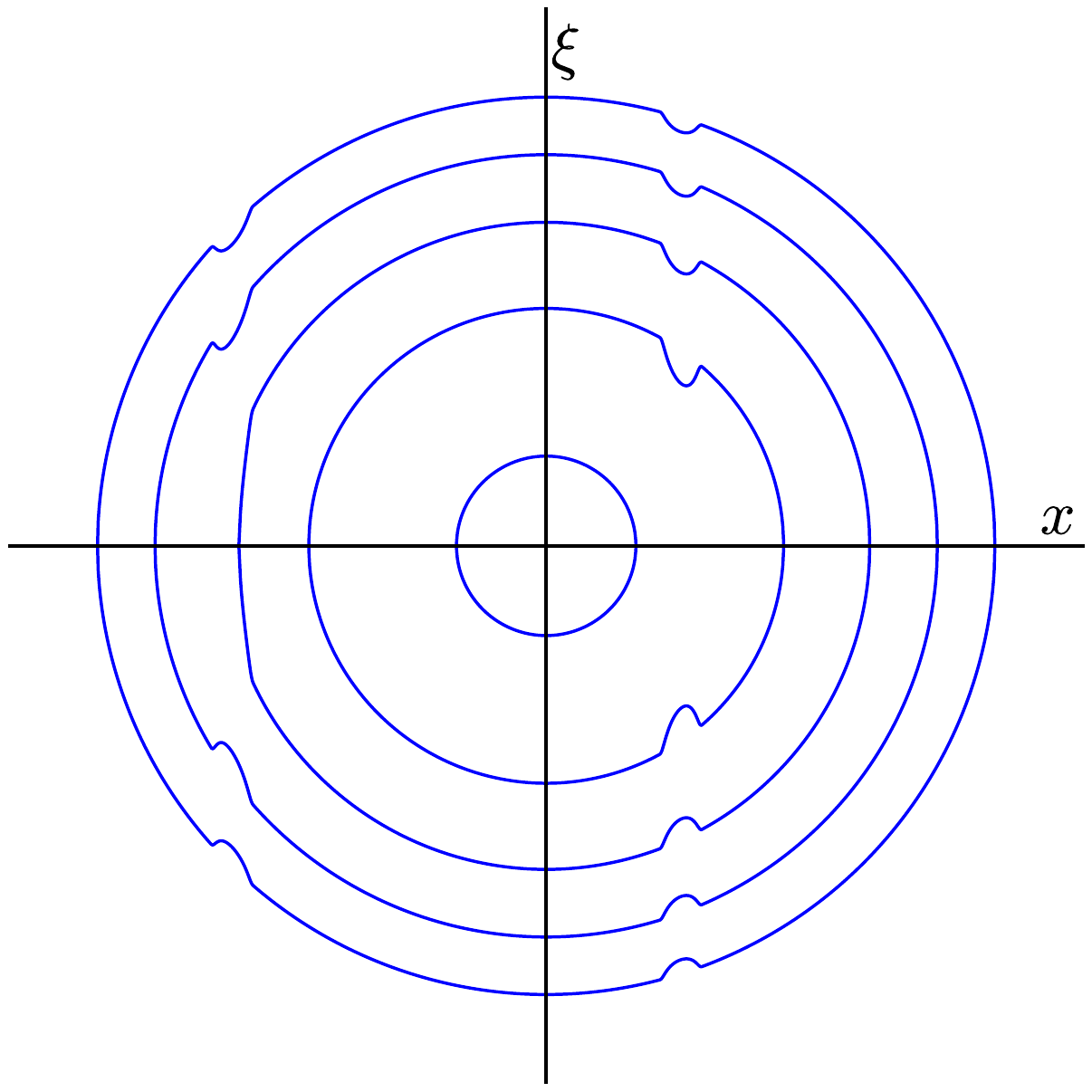}
\includegraphics[trim={4.5cm 8.5cm 4cm 7.5cm},clip,width=0.35\textwidth]{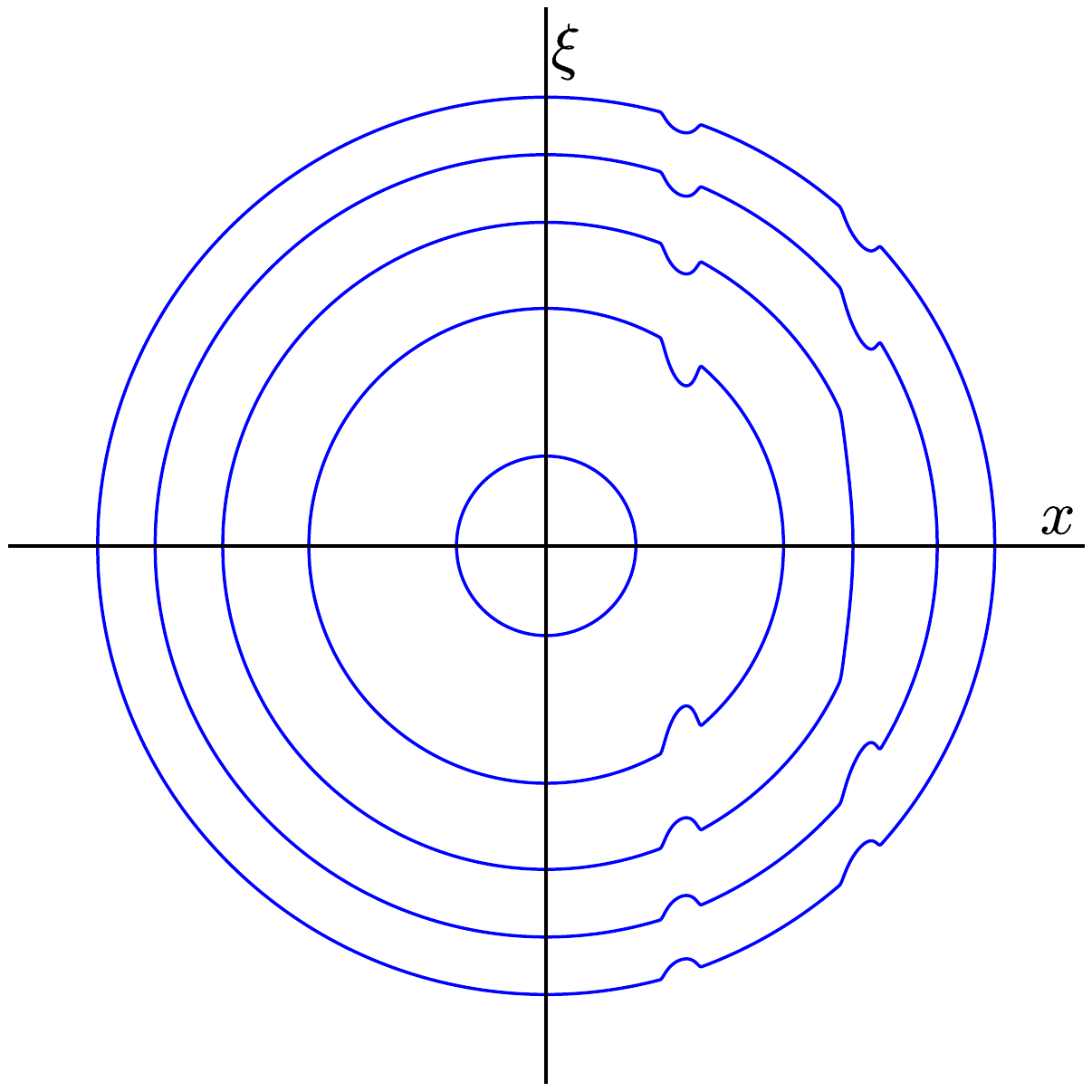}
\caption{Level sets of the Hamiltonians corresponding to $H^\pm$ ($H^-$ left, $H^+$ right).}
\label{fig:Hpm}
\end{figure}

\begin{proposition}
\label{prop:eigBound}
Let $H=-h^2\frac{d^2}{dx^2} +V(x)$ be a semiclassical Schr\"odinger operator with real valued potential of the form $V(x)=x^2+\gamma(x)$, for some $\gamma\in C_0^\infty(\R)$ supported away from the origin.
Let $\psi_j(x)$ denote the $L^2$ normalized eigenfunctions of $H$.  Then, for any $\delta,r,R\in(0,\infty)$ with $r<R$, there exist constants $C_j,D_j>0$ such that for all $h>0$ sufficiently small, the following estimates hold uniformly on $r<|x|<R$
\begin{flalign*}
	\qquad(i)&\qquad|\psi_j(x)|\leq C_je^{-\frac{1}{h}(1-\delta)^2\left|\int_0^x\sqrt{V(x)}\,dx\right|}&\\
	\qquad(ii)&\qquad|\psi_j(x)|\geq D_je^{-\frac{1}{h}(1+\delta)^2\left|\int_0^x\sqrt{V(x)}\,dx\right|}&
\end{flalign*}
\end{proposition}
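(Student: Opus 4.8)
The plan is to obtain both bounds by Agmon-type estimates on the eigenfunctions, following the method Simon uses in \cite{simon1984semiclassical}. The Agmon distance associated with the potential $V$ (relative to the well at the origin, where the classical minimum energy is $0$) is $\rho(x)=\left|\int_0^x\sqrt{V(t)}\,dt\right|$, so the target bounds are precisely decay at rate $e^{-\rho(x)/h}$ up to the loss factors $(1\mp\delta)^2$. Throughout one works with a fixed $j$, using that $\psi_j$ solves $-h^2\psi_j'' + (V - E_j)\psi_j = 0$ with $E_j = (2j+1)h + O(h^\infty)$, so that $V(x) - E_j \geq \tfrac12 V(x) > 0$ on $r<|x|<R$ once $h$ is small (since $V$ is bounded below by a positive constant there).

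For the upper bound $(i)$, the standard device is to multiply the eigenvalue equation by $e^{2\phi/h}\psi_j$ for a Lipschitz weight $\phi$ with $|\phi'|^2 \leq (1-\delta)(V-E_j)$, integrate by parts, and absorb the first-derivative terms to get an estimate of the form $\int e^{2\phi/h}\big((1-\delta)(V-E_j) - |\phi'|^2\big)|\psi_j|^2 \lesssim (\text{boundary/cutoff terms})$. Choosing $\phi(x) = (1-\delta)\rho(x)$ truncated appropriately (a genuine subsolution on $|x|>r$, flattened near the origin and outside a large ball) yields $\|e^{\phi/h}\psi_j\|_{L^2} \leq C_j$, and then elliptic regularity (or a Sobolev/Agnew-type pointwise bound converting the $L^2$ weighted estimate to a sup bound, at the cost of replacing one factor $(1-\delta)$ by $(1-\delta)^2$ to leave room) gives the pointwise statement on $r<|x|<R$. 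The extra power of $(1-\delta)$ — i.e. $(1-\delta)^2$ rather than $(1-\delta)$ in the exponent — is exactly the slack one spends passing from the weighted $L^2$ bound to the locally uniform pointwise bound and swallowing polynomial-in-$h$ prefactors.

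For the lower bound $(ii)$, I would argue by a comparison/barrier (sub- and super-solution) argument for the ODE on the classically forbidden region. Since $\psi_j$ has a fixed number of zeros (the $j$-th eigenfunction of a 1D Schrödinger operator), on $|x|>r$ (for $h$ small, past all the zeros, which concentrate near the turning points $|x|\approx\sqrt{E_j}=O(\sqrt h)$) the function $\psi_j$ is nonvanishing and, by the classical WKB/Riccati analysis, behaves like $V^{-1/4}\exp\!\big(-\tfrac1h\int\sqrt{V-E_j}\big)$. Concretely: set $u = \psi_j$ on $|x|>r$; the Riccati substitution $w = -h\,u'/u$ gives $h w' = w^2 - (V-E_j)$, and one shows $w$ stays trapped near $\sqrt{V-E_j}$, which upon integrating yields $|\psi_j(x)| \geq |\psi_j(x_0)|\exp\!\big(-\tfrac1h(1+\delta)\int_{x_0}^x\sqrt{V}\big)$ for a reference point $x_0$ with $r<x_0<R$. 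It then remains to bound $|\psi_j(x_0)|$ from below by a constant independent of $h$; this one gets from the fact that $\psi_j$ is $L^2$-normalized and, by the upper bound $(i)$ already proved together with the harmonic-oscillator model, the $L^2$ mass in $|x|<r$ is at most $e^{-c/h}$-small away from the turning region but cannot vanish — more carefully, one compares to the unperturbed oscillator eigenfunction (which is known explicitly) on a fixed interval to get $|\psi_j(x_0)|\geq D_j$. The second factor of $(1+\delta)$ again absorbs the polynomial prefactor $V^{-1/4}$ and the accumulated error in the Riccati comparison.

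The main obstacle I anticipate is the lower bound, specifically establishing that $|\psi_j(x_0)|$ does not decay (even along a subsequence of $h$) faster than any fixed constant — this is the technical heart that distinguishes this result from \cite{guillemin2012hezari}, where exactly this kind of uniform-in-$h$ control was missing. The cleanest route is probably to pin down the eigenfunction near the turning point by comparison with the Airy-type model or, more robustly, to use that $\|\psi_j\|_{L^2}=1$ forces $\|\psi_j\|_{L^2(|x|<R_0)}\geq 1/2$ for some fixed large $R_0$ (by the upper bound applied outside $R_0$), and then combine this with interior elliptic estimates and the ODE structure to locate a point $x_0\in(r,R)$ — or rather to run the Riccati comparison starting from inside the allowed region $|x|<\sqrt{E_j}$ outward — where $|\psi_j|$ is bounded below. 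One must take some care that all constants, while allowed to depend on $j$ (and on $\delta,r,R$), are genuinely independent of $h$; keeping the $\delta$-budget organized (one $\delta$ for the Riccati trapping, the rest of the $(1\pm\delta)^2$ for prefactors) is the bookkeeping one has to get right.
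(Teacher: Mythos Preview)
Your outline for part (i) is essentially the paper's argument: an Agmon conjugation identity gives a weighted $L^2$ bound $\int e^{2\varphi/h}\psi_j^2\,dx\leq C$, and one then passes to a pointwise bound on the annulus (the paper does this last step via convexity of $|\psi_j|$ in the forbidden region rather than elliptic regularity, but the substance is the same).

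For part (ii) your propagation step---push a lower bound outward from a reference point by a comparison argument---matches the paper's, though the paper uses an explicit exponential barrier and the maximum principle (its Lemma~\ref{lemma:boundEndpoint}) in place of your Riccati substitution. Either device works for propagation.

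The real gap is in the anchoring step. You propose obtaining $|\psi_j(x_0)|\geq D_j$ at a \emph{fixed} $x_0\in(r,R)$; this is impossible, since by the upper bound (i) any such point satisfies $|\psi_j(x_0)|\leq C_j e^{-c/h}\to 0$. Your parenthetical correction---start instead from inside the allowed region $|x|<\sqrt{E_j}$---is the right instinct, but $\sqrt{E_j}=O(\sqrt h)$, so the starting point must be $h$-dependent and collapses to the origin. The paper anchors precisely at the moving turning point $x_1=\sqrt{E_j}$ and proves $|\psi_j(\pm\sqrt{E_j})|\geq D_j/\sqrt h\geq 1$ for small $h$ (Lemma~\ref{lemma:boundClassical}). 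This is done by rescaling $x\mapsto\sqrt h\,x$ so that $H$ becomes the unit harmonic oscillator plus a perturbation whose support runs off to infinity, proving via a resolvent/projection estimate (Lemma~\ref{lemma:L2conv}) that the rescaled eigenfunctions converge in $L^\infty_{\mathrm{loc}}$ to the Hermite functions $\tilde\kappa_j$, and finally checking $\tilde\kappa_j(\pm\sqrt{2j+1})\neq 0$ by a Gershgorin bound on the Hermite zeros (Lemma~\ref{lemma:boundHermite}). This $h$-dependent anchoring at the turning point, rather than at a fixed $x_0$, is the missing idea in your sketch and is exactly the mechanism that yields a uniform-in-$h$ lower bound and thereby sharpens the result of \cite{guillemin2012hezari}.
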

\begin{remark}
\normalfont
A similar proposition appears in the work of Simon \cite{simon1984semiclassical}, with some important differences.  Simon studies the ground state eigenfunctions of the double well potential, in contrast to this paper where we study the single well potential.  Additionally, the above result holds for each eigenfunction, whereas in the paper of Simon, the analogous result was only established for the ground state.
\end{remark}
\begin{proof}[Proof of Proposition \ref{prop:eigBound}.(i)]
Define $\tilde\varphi(x)=\left|\int_0^x\sqrt{V(t)}\,dt\right|$.  Note that for each choice of positive constants $\sigma,r,R,\delta$ there exists a smooth mollifier $f$ which satisfies the following:
\begin{flalign*}
	\qquad(1)&\text{ on the annulus }r<|x|<R,\;f*\tilde\varphi\leq \tilde\varphi+\sigma\text{ and }f*(\tilde\varphi')\leq\tilde\varphi'+\sigma&\\
	\qquad(2)&\text{ there exists positive constants } a,k\text{ with }a<r,\text{ such that for all }|y|\leq a\text{ and }r<|x|<R,\text{ we have }&\\		&\text{ }\delta(f*\tilde\varphi)(x)-(f*\tilde\varphi)(y)\geq k>0.&
\end{flalign*}
Let $\epsilon>0$ and define $A=\inf\left\{(1-\epsilon)\frac{\sqrt{V(x)}}{\sqrt{V(x)}+\sigma}:r<|x|<R\right\}$.
Finally, define $\varphi=(Af)*\tilde\varphi$.  We may choose $\epsilon,\sigma$ sufficiently small so that 
\[
	(1-\delta)\tilde\varphi\leq(1-\delta)(f*\tilde\varphi)\leq A(f*\tilde\varphi),
\]
and 
\[
	A(f*\tilde\varphi)\leq A(\tilde\varphi+\sigma)\leq\tilde\varphi+\sigma\leq\tilde\varphi+\delta\tilde\varphi=(1+\delta)\tilde\varphi.
\]
Thus,
\begin{equation}
\label{eqn:mollBound}
	(1-\delta)\tilde\varphi\leq\varphi\leq(1+\delta)\tilde\varphi.
\end{equation}
Also,
\begin{equation}
\label{eqn:mollDerivBound}
	|\varphi'(x)|
	\leq A(\sqrt{V(x)}+\sigma)
	\leq(1-\epsilon)\sqrt{V(x)}.
\end{equation}
Without loss of generality we will smoothly re-define $\varphi$ such that for $|x|\geq 2 R$, $\varphi'(x)=0$.
Recall that $E_j$ is the $j^\text{th}$ eigenvalue of $H$, and let $\psi$ be any smooth function. We will now introduce the key object which gives us the upper bound of Proposition \ref{prop:eigBound}.(i):
\begin{align*}
	\left\langle e^{\frac{\varphi}{h}}\psi,(H-E_j)e^{-\frac{\varphi}{h}}\psi\right\rangle
	&=\left\langle\psi,(V-E_j)\psi\right\rangle-\left\langle\psi,\left(h\frac{d}{dx}-\varphi'\right)^2\psi\right\rangle.
\end{align*}
By the fundamental theorem of calculus, and that $\varphi'$ is eventually 0, we have that 
\[
	\left\langle\psi,\left[\left(h\frac{d}{dx}\right)\varphi'+\varphi'\left(h\frac{d}{dx}\right)\right]\psi\right\rangle=0.
\]
So
\begin{align*}
	\left\langle e^{\frac{\varphi}{h}}\psi,(H-E_j)e^{-\frac{\varphi}{h}}\psi\right\rangle
	&=\left\langle\psi,\left[\left(-h^2\frac{d^2}{dx^2}-(\varphi')^2\right)+V-E_j\right]\psi\right\rangle.
\end{align*}
By bound (\ref{eqn:mollDerivBound}), and since $-\frac{d^2}{dx^2}$ is positive definite we arrive at the following bound: 
\begin{align*}
	\left\langle e^{\frac{\varphi}{h}}\psi,(H-E_j)e^{-\frac{\varphi}{h}}\psi\right\rangle \geq\left\langle\psi,\left[\epsilon V-E_j\right]\psi\right\rangle.
\end{align*}
Recall that the eigenvalue $E_j=h(2j+1)+O(h^\infty)$, so for all $|x|>a/2$ and $h$ sufficiently small, there exist a $C>0$ satisfying $\epsilon V(x)-E_j\geq \frac{1}{C}>0$.  Taking $\psi$ supported in $|x|>a/2$, we get 
\[
	||\psi||_2^2\leq C\left\langle e^{\frac{\varphi}{h}}\psi,(H-E_j)e^{-\frac{\varphi}{h}}\psi\right\rangle.
\]

Let $\psi=e^\frac{\varphi}{h}\eta\psi_j$, where $\eta(x)=0$ if $|x|<\frac{a}{2}$ and $\eta(x)=1$ if $|x|>a$, and is smooth. Using the above bound, 
\begin{align*}
	\int_{|x|>a}e^\frac{2\varphi}{h}\psi_j^2\,dx
	&\leq C\left\langle e^\frac{2\varphi}{h}\psi_j,(H-E_j)\eta\psi_j\right\rangle.
\end{align*}
Using the eigenvalue equation, bounding the derivatives of $\eta$ on $a/2<|x|<a$, and using monotonicity of $\varphi$ on $|x|>a/2$, we get 
\begin{align*}
	\int_{|x|>a}e^\frac{2\varphi}{h}\psi_j^2\,dx
	&\leq h^2Ce^\frac{2\varphi(a)}{h}\int_{\frac{a}{2}<|x|<a} (K\psi_j'\psi_j+D\psi_j^2).
\end{align*}
Applying Cauchy-Schwarz, using that $E_j=h(2j+1)+O(h^\infty)$ and the eigenvalue equation to see that for some constant $K_j$, $||\psi_j'||=\frac{K_j}{h}+O(h^\infty)$, along with the fact that the eigenfunctions are $L^2$ normalized we have
\begin{align*}
	\int_{|x|>a}e^\frac{2\varphi}{h}\psi_j^2\,dx
	&\leq Ce^\frac{2\varphi(a)}{h}(K_jh+h^2D).
\end{align*}
Finally, for $h$ sufficiently small, we get the following estimate:
\begin{equation}
	\label{eqn:intermdUppBd}
	\int_{|x|>a}e^\frac{2\varphi}{h}\psi_j^2\,dx
	\leq e^\frac{2\varphi(a)}{h}C.
\end{equation}

We will now use bound (\ref{eqn:intermdUppBd}) to achieve the desired pointwise bound on $r<|x|<R$ by using convexity of $|\psi_j|$ in the forbidden region $|x|>\sqrt{h(2j+1)+O(h^\infty)}$.  Without loss of generality assume that $r<x<R$, then for all $b$ satisfying $0<b<r-a$ and $h$ sufficiently small we have
\begin{align*}
	\psi_j^2(x)
	&\leq\frac{1}{2b}\int_{x-b}^{x+b}e^{-2\frac{\varphi}{h}}e^{2\frac{\varphi}{h}}\psi_j^2\,dt.
\end{align*}
By the monotonicity of $\varphi$ in $|x|>a$ and applying bound (\ref{eqn:intermdUppBd}) we get
\begin{align*}
	\psi_j^2(x)
	&\leq\frac{1}{2b}e^{-2\frac{\varphi(x-b)}{h}}e^\frac{2\varphi(a)}{h}C.
\end{align*}
Let $M=\max_{a<|x|<R}\varphi'(x)$.  Then using $-\varphi(x-b)\leq -\varphi(x)+bM$ we get 
\begin{align*}
	\psi_j^2(x)
	&\leq\frac{C}{2b}e^{-2\frac{\varphi(x)}{h}}e^{2\frac{bM}{h}}e^\frac{2\varphi(a)}{h}.
\end{align*}
By our choice of $a$ which guarantees $\delta\varphi(x)-\varphi(a)\geq k>0$, we get
\begin{align*}
	\psi_j^2(x)
	&\leq\frac{C}{2b}e^{-2(1-\delta)\frac{\varphi(x)}{h}}e^{2\frac{bM-k}{h}}.
\end{align*}
Taking $b<k/M$ and using bound (\ref{eqn:mollBound}) establishes the upper bound of Proposition \ref{prop:eigBound}:
\[
	|\psi_j(x)|\leq Ce^{-(1-\delta)^2\frac{\tilde\varphi(x)}{h}}.
\]
\end{proof}


The proof of Proposition \ref{prop:eigBound}.(ii) is more difficult.  We will first need to establish a lower bound for $\psi_j$ on the boundary of the allowed region $\{-\sqrt{E_j},\sqrt{E_j}\}$.  To achieve this we need a fact about the $L^2(\R)$ convergence of rescaled eigenfunctions of $H$.

\begin{lemma}
\label{lemma:L2conv}
Let $\tilde H=-\frac{d^2}{dx^2}+x^2+h^{-1}\gamma(\sqrt{h}x)$ be the rescaled perturbed semiclassical harmonic oscillator, and $\tilde\psi_j(x)$ be the eigenfunctions of $\tilde H$.  Let $\tilde H_0=-\frac{d^2}{dx^2}+x^2$ be the harmonic oscillator, and let $\tilde\kappa_j$ denote its eigenfunctions.  Then $\tilde\psi_j(x)\to\tilde\kappa_j(x)$ in $L^2(\R)$ as $h\to 0$.
\end{lemma}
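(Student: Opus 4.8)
The plan is to use $\tilde\kappa_j$ \emph{itself} as an approximate eigenfunction of $\tilde H$ and then invoke the $O(h^\infty)$ spectral asymptotics quoted above to force it to be $L^2$-close to the genuine eigenfunction $\tilde\psi_j$. Set $W_h(x):=h^{-1}\gamma(\sqrt h\,x)$, so that $\tilde H=\tilde H_0+W_h$. Since $\gamma\in C_0^\infty(\R\setminus\{0\})$ there are $0<a_0\le b_0<\infty$ with $\supp\gamma\subseteq\{a_0\le|x|\le b_0\}$, and hence $W_h$ is supported in the receding annulus $\{a_0/\sqrt h\le|x|\le b_0/\sqrt h\}$ with $\|W_h\|_{L^\infty}=h^{-1}\|\gamma\|_{L^\infty}$. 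The essential point is that, although $W_h$ grows in height, its support escapes to infinity, where the Hermite functions are exponentially small.

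First I would bound $\epsilon_h:=\|W_h\tilde\kappa_j\|_{L^2}$. Since $\tilde\kappa_j$ is a Hermite function, $|\tilde\kappa_j(x)|\le C_j(1+|x|)^j e^{-x^2/2}$ for all $x$, so splitting $e^{-x^2}=e^{-x^2/2}e^{-x^2/2}$ gives
\[
\epsilon_h^2\le h^{-2}\|\gamma\|_{L^\infty}^2\int_{|x|\ge a_0/\sqrt h}|\tilde\kappa_j(x)|^2\,dx\le h^{-2}\|\gamma\|_{L^\infty}^2 C_j^2\,e^{-a_0^2/(2h)}\int_\R(1+|x|)^{2j}e^{-x^2/2}\,dx,
\]
so $\epsilon_h\le h^{-1}\tilde C_j e^{-a_0^2/(4h)}\to0$ as $h\to0^+$. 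Because $\tilde H\tilde\kappa_j=(2j+1)\tilde\kappa_j+W_h\tilde\kappa_j$, this says precisely that $\|(\tilde H-(2j+1))\tilde\kappa_j\|_{L^2}=\epsilon_h$. Next I would fix $j$ and use the asymptotics $\tilde E_k=(2k+1)+O(h^\infty)$ together with $\tilde E_0<\tilde E_1<\cdots$: there is $h_0>0$ such that for $0<h<h_0$ the window $I_j:=(2j,2j+2)$ meets $\operatorname{spec}(\tilde H)$ only at $\tilde E_j$, and since $\tilde H$ has discrete simple spectrum (its potential tends to $+\infty$), the spectral projection $P_j$ of $\tilde H$ associated with $I_j$ is the rank-one projection onto $\R\tilde\psi_j$. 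Writing $E(\cdot)$ for the projection-valued measure of $\tilde H$ and using $|\mu-(2j+1)|\ge 1$ on $\R\setminus I_j$,
\[
\|(1-P_j)\tilde\kappa_j\|_{L^2}^2=\int_{\R\setminus I_j}d\|E(\mu)\tilde\kappa_j\|^2\le\int_\R|\mu-(2j+1)|^2\,d\|E(\mu)\tilde\kappa_j\|^2=\|(\tilde H-(2j+1))\tilde\kappa_j\|_{L^2}^2=\epsilon_h^2.
\]
Hence $\|P_j\tilde\kappa_j\|\ge 1-\epsilon_h$, and since $P_j\tilde\kappa_j=\langle\tilde\kappa_j,\tilde\psi_j\rangle\tilde\psi_j$ we get $|\langle\tilde\kappa_j,\tilde\psi_j\rangle|\ge1-\epsilon_h$. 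Fixing the sign of $\tilde\psi_j$ so that $\langle\tilde\kappa_j,\tilde\psi_j\rangle\ge0$ (and noting the inner product is $\le1$), we conclude $\|\tilde\psi_j-\tilde\kappa_j\|_{L^2}^2=2-2\langle\tilde\kappa_j,\tilde\psi_j\rangle\le2\epsilon_h\to0$, which is the claim — in fact with an exponential rate.

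I do not expect a serious obstacle here once the $O(h^\infty)$ asymptotics are granted: they are used both to name the limiting eigenvalue $2j+1$ and to guarantee that a fixed window about it isolates exactly $\tilde E_j$ for small $h$, so that $P_j$ is rank one. The only point that genuinely matters is that the perturbation is tested against the \emph{known} function $\tilde\kappa_j$, whose Gaussian tail is under control, rather than against $\tilde\psi_j$: a priori we have no uniform-in-$h$ decay of $\tilde\psi_j$ far from the origin, so $\|W_h\tilde\psi_j\|$ is not visibly small, and the order of the argument is what makes it work. If one preferred not to quote the spectral asymptotics, one could instead establish the eigenvalue convergence directly — the min--max principle with trial function $\tilde\kappa_j$ yields the upper bounds, and a matching lower bound follows from an IMS/Weyl-sequence localization argument exploiting that $\supp W_h$ escapes to infinity — but this is longer, and the route above is preferable.
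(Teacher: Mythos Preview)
Your argument is correct, and it reaches the same conclusion by a genuinely different route from the paper. The paper compares the Riesz projections $P_{\tilde H}$ and $P_{\tilde H_0}$ via the resolvent identity over a small contour around $2j+1$, applies the resulting operator to the \emph{perturbed} eigenfunction $\tilde\psi_j$, and is then forced to bound $\|h^{-1}\gamma(\sqrt h\,\cdot)\,\tilde\psi_j\|_{L^2}$; since this involves the unknown $\tilde\psi_j$ on the receding support of the perturbation, the paper invokes the already-proved Agmon-type upper bound of Proposition~\ref{prop:eigBound}(i) to get smallness. You instead test the perturbation against the \emph{known} Hermite function $\tilde\kappa_j$, whose explicit Gaussian tail controls $\|W_h\tilde\kappa_j\|$ directly, and then use only the elementary spectral-theoretic estimate $\|(1-P_j)u\|\le \|(\tilde H-(2j+1))u\|$ (valid once the gap to the rest of the spectrum is at least $1$) in place of contour integration. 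Your route is more self-contained---no resolvent calculus, no dependence on Proposition~\ref{prop:eigBound}---and it even yields an explicit exponential rate $\|\tilde\psi_j-\tilde\kappa_j\|_{L^2}^2=O(e^{-c/h})$; the paper's route is the standard Kato perturbation-theoretic one and meshes with the Agmon machinery already in play, but it does rely on having established the eigenfunction upper bound first.
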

\begin{proof}
Define $(U\phi)(x):=h^{-\frac{1}{2}}\phi(h^{-\frac{1}{2}}x)$, and notice that $\tilde H=h^{-1}U^{-1}HU$ (and $\tilde H_0=h^{-1}U^{-1}H_0U$).  Let $\tilde E_j$ denote the eigenvalues of $\tilde H$.  The rescaling operator $U$ gives us a relationship between the eigenfunctions and eigenvalues of $\tilde H$ and $H$:
\[
	E_j=h\tilde E_j,\quad \psi_j=U\tilde\psi_j.
\] 
From this, and the fact that $\text{spec}(H)=\{h,3h,5h,\dots\}+O(h^\infty)$, we get that
\[
	\text{spec}(\tilde H)=\{1,3,5,\dots\}+O(h^\infty).
\]
Note by a similar rescaling, we get that $\text{spec}(\tilde H_0)=\{1,3,5,7,\dots\}$.

We define the projections $P_{\tilde H}$ and $P_{H_0}$ as follows:
\[
	P_{\tilde H}=\frac{1}{2\pi i}\oint_{\partial D(\tilde E_j,\epsilon)}(z-\tilde H)^{-1}\,dz,\qquad P_{H_0}=\frac{1}{2\pi i}\oint_{\partial D(2j+1,\epsilon)}(z-H_0)^{-1}\,dz,
\]
where $D(z,\epsilon)$ is the disc centered at $z$ of radius $\epsilon$.  Note that these projections are rank one, onto the subspaces spanned by $\tilde\psi_j$ and $\tilde\kappa_j$ respectively, see \cite{berezin2012schrodinger}.  Now, for all $\epsilon>0$ and $h>0$ sufficiently small, we have that $|\tilde E_{j-1}-(2j-1)|<\epsilon$, $|\tilde E_j-(2j+1)|<\epsilon$, and $|\tilde E_{j+1}-(2j+3)|<\epsilon$ so we may write the difference of the projections under the same integrand:

\[
	P_{\tilde H_0}-P_{\tilde H}
	=\frac{1}{2\pi i}\oint_{\partial D(2j+1,\epsilon)}(z-\tilde H_0)^{-1}(\tilde H_0-\tilde H)(z-\tilde H)^{-1}\,dz.
\]

Now, applying this difference to the eigenfunction $\tilde \psi_j$ yields:
\begin{align*}
	\norm{(P_{\tilde H_0}-P_{\tilde H})\tilde\psi_j}_2
	&\leq\frac{1}{2\pi}\norm{\oint_{\partial D(2j+1,\epsilon)}(z-\tilde E_1(h))^{-1}(z-H_0)^{-1}\,dz}_{op}\norm{(h^{-1}\gamma(\sqrt h x))\tilde\psi_j}_2.
\end{align*}
Bounding by the distance to the spectrum along the contour, and that $\gamma\in C_0^\infty(\R\setminus\{0\})$, we get 
\begin{align*}
	\norm{(P_{\tilde H_0}-P_{\tilde H})\tilde\psi_j}_2
	&\leq C\norm{h^{-1}\tilde \psi_j}_{L^2(\supp(\gamma(\sqrt h x)))}.
\end{align*}
One can see that  $\lim_{h\to 0}\norm{h^{-1}\tilde \psi_j}_{L^2(\supp(\gamma(\sqrt h x)))}=0$  by Proposition  \ref{prop:eigBound}.(i), and that $\lim_{h\to 0}\inf \supp(\gamma(\sqrt h x))=\infty$.  Some algebra yields the desired results: 
\[
	1-\langle\tilde\psi_j,\tilde\kappa_j\rangle^2
	=\norm{(P_{\tilde H_0}-P_{\tilde H})\tilde\psi_j}_2^2,
\]
so $\lim_{h\to 0}\langle\tilde\psi_j,\tilde\kappa_j\rangle=1$.  Finally
\[
	\norm{\tilde\psi_j-\tilde\kappa_j}_2^2
	=2(1-\langle\tilde\psi_j,\tilde\kappa_j\rangle),
\]
which tends to 0 in the limit as $h\to 0$.
\end{proof}

The above lemma tells us that in the semiclassical limit, the $L^2$ mass of the perturbed eigenfunctions are distributed in the same way as the non-perturbed eigenfunctions.  In fact, a stronger result holds, the perturbed eigenfunctions converge locally uniformly to the non-perturbed eigenfunctions.  This is proven along the way to establishing lemma \ref{lemma:boundClassical}.  However, before we can prove the lemma, we require a bound on the roots of $\tilde\kappa_j$.

\begin{lemma}
\label{lemma:boundHermite}
For each $j\in\Z_{\geq 0}$, $\max\{|r|:\tilde\kappa_j(r)=0\}\leq\sqrt{2j-2}$.
\end{lemma}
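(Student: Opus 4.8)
The plan is to reinterpret the largest zero of $\tilde\kappa_j$ as a Dirichlet eigenvalue on a half-line and then compare with the exactly solvable linear–potential (Airy) problem. First I would dispose of the small cases: for $j=0$ there is nothing to prove since $\tilde\kappa_0$ has no zeros, and for $j=1$ one has $\tilde\kappa_1(x)=c\,xe^{-x^2/2}$, whose only zero is $0=\sqrt{2\cdot1-2}$. So fix $j\ge 2$ and let $r=\max\{|x|:\tilde\kappa_j(x)=0\}$; since the zeros of a Hermite function are real, simple, and symmetric about the origin, $r>0$ is the largest zero. On $(r,\infty)$ the function $\tilde\kappa_j$ solves $-\psi''+x^2\psi=(2j+1)\psi$, lies in $L^2(r,\infty)$, vanishes at $r$, and has no further zeros, hence has constant sign. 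Up to normalization $\tilde\kappa_j|_{(r,\infty)}$ is therefore the ground state of the Dirichlet realization of $-\tfrac{d^2}{dx^2}+x^2$ on $(r,\infty)$: the sign-definite eigenfunction must be the lowest one, since any other Dirichlet eigenfunction is $L^2$-orthogonal to it and hence changes sign. Consequently
\[
	2j+1=\inf\left\{\int_r^\infty\big(\psi'(x)^2+x^2\psi(x)^2\big)\,dx:\ \psi\in C_c^\infty(r,\infty),\ \int_r^\infty\psi^2=1\right\}.
\]

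Next I would bound this Rayleigh quotient from below. Since $x^2$ is convex, $x^2\ge r^2+2r(x-r)$ for all $x\ge r\ge 0$, so replacing the potential by its tangent line at $r$ only decreases the infimum: with $y=x-r$,
\[
	2j+1\ \ge\ r^2+\inf\left\{\int_0^\infty\big(w'(y)^2+2r\,y\,w(y)^2\big)\,dy:\ w\in C_c^\infty(0,\infty),\ \int_0^\infty w^2=1\right\}.
\]
The remaining infimum is the lowest Dirichlet eigenvalue of $-\tfrac{d^2}{dy^2}+2ry$ on $(0,\infty)$; the rescaling $y\mapsto(2r)^{-1/3}y$ turns this operator into $(2r)^{2/3}\big(-\tfrac{d^2}{dy^2}+y\big)$, so the eigenvalue equals $(2r)^{2/3}\mu_1$, where $\mu_1$ is the lowest Dirichlet eigenvalue of $-\tfrac{d^2}{dy^2}+y$ on $(0,\infty)$. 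This last problem is solved by the Airy function: the decaying solution of $-w''+yw=\mu w$ is $w(y)=\mathrm{Ai}(y-\mu)$, and the Dirichlet condition together with sign-definiteness forces $\mu=\mu_1=-a_1'\approx 2.338$, where $a_1'$ is the largest (negative) zero of $\mathrm{Ai}$. This gives the key inequality
\[
	2j+1\ \ge\ r^2+\mu_1(2r)^{2/3}.
\]

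Finally I would conclude by a short case distinction. If $r\ge\tfrac12(3/\mu_1)^{3/2}$ (numerically, $r\ge 0.73$), then $\mu_1(2r)^{2/3}\ge 3$, and the key inequality yields $r^2\le(2j+1)-3=2j-2$. If instead $r<\tfrac12(3/\mu_1)^{3/2}$, then $r^2<\tfrac14(3/\mu_1)^3<0.53\le 2\le 2j-2$ because $j\ge 2$. In either case $r\le\sqrt{2j-2}$, as claimed.

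I expect the main obstacle to be the lower bound on the Dirichlet ground state energy: the trivial estimate $x^2\ge r^2$ only yields $r\le\sqrt{2j+1}$, so one genuinely must retain the linear term, recognize the comparison operator as an Airy operator, and use the value $\mu_1=-a_1'>3/2$. A secondary, routine point needing care is the functional-analytic claim that $\tilde\kappa_j|_{(r,\infty)}$ is the Dirichlet ground state — i.e.\ that $2j+1$ is exactly, not merely at least, the bottom of the Dirichlet spectrum on $(r,\infty)$.
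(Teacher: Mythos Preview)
Your proof is correct and proceeds along a genuinely different route from the paper's. The paper argues algebraically: it writes $\tilde\kappa_j(x)=c_jP_j(x)e^{-x^2/2}$ with $P_j$ the physicist's Hermite polynomials, derives the three-term recurrence $P_{j+1}(x)=2xP_j(x)-2jP_{j-1}(x)$, and observes that $P_j(x)=2^j\det(xI-C_j)$ for the Jacobi matrix $C_j$ with zero diagonal and off-diagonal entries $\sqrt{k/2}$; the Gershgorin circle theorem then bounds the roots by $\sqrt{(j-2)/2}+\sqrt{(j-1)/2}\le\sqrt{2j-2}$. Your argument is instead spectral: you recognise $\tilde\kappa_j|_{(r,\infty)}$ as the Dirichlet ground state on the half-line, linearise $x^2$ at $r$ to compare with an Airy operator, and exploit the numerical value of the first Airy zero to obtain $2j+1\ge r^2+\mu_1(2r)^{2/3}$. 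The paper's method is entirely elementary linear algebra and needs no transcendental constant, while yours is more analytic and in fact yields the sharper implicit inequality $r^2\le 2j+1-\mu_1(2r)^{2/3}$, which recovers the correct $O(j^{1/3})$ gap between the largest zero and the turning point $\sqrt{2j+1}$; your approach would also transfer to other confining potentials lacking a Jacobi-matrix structure. Either way, the only use of the lemma downstream is to ensure $\tilde\kappa_j(\pm\sqrt{2j+1})\neq 0$, for which both bounds are more than enough. One cosmetic point: what you call $a_1'$ is normally written $a_1$ (the first zero of $\mathrm{Ai}$), since $a_1'$ is standard notation for the first zero of $\mathrm{Ai}'$.
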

\begin{proof}
Let $A^*:=(x-\frac{d}{dx})$ and $A:=(x+\frac{d}{dx})$ be the creation and annihilation operators associated to $\tilde H_0$, satisfying $AA^*-\text{Id}=A^*A+\text{Id}=\tilde H_0$.  One can see that $\tilde\kappa_j=\frac{1}{\sqrt{2^j j!}}{A^*}^j\tilde\kappa_0$, where $\tilde\kappa_0=\pi^{-\frac{1}{4}}e^{-\frac{x^2}{2}}$.  Moreover, 
\[
	\tilde\kappa_j(x)=\frac{1}{\sqrt{2^j j!\sqrt{\pi}}}P_j(x)e^{-\frac{x^2}{2}},
\]
where $P_j(x)$ are the physicist's Hermite polynomials.  From the creation operator definition of $\tilde\kappa_j$, one can see that the $P_j$ satisfy the recurrence relation $P_{j+1}(x)=2xP_j(x)-\frac{d}{dx}P_j(x)$.  An inductive argument shows that $\frac{d}{dx}P_j(x)=2jP_{j-1}(x)$, and so we have the modified recurrence relation $P_{j+1}(x)=2xP_j(x)-2jP_{j-1}(x)$.  Define the tridiagonal symmetric matricies
\[
	C_j=\begin{pmatrix}
	0 & \sqrt{\frac{1}{2}} & 0 & 0 & 0 &\cdots & 0& 0\\
	\sqrt{\frac{1}{2}} & 0 & \sqrt{\frac{2}{2}} & 0 & 0 & \cdots & 0& 0\\
	0 & \sqrt{\frac{2}{2}} & 0 & \sqrt{\frac{3}{2}} & 0 & \cdots & 0& 0\\
	\vdots &  &  & \ddots & & & \vdots & \vdots\\
	0 & 0 & 0 & 0 & 0 & \cdots & 0 & \sqrt{\frac{j-1}{2}}\\
	0 & 0 & 0 & 0 & 0 & \cdots & \sqrt{\frac{j-1}{2}}& 0
	\end{pmatrix}.
\]
Through induction and using the above recurrence relation, one can see that $P_j(x)=2^j\text{det}(xI-C_j)$, and so the roots of $P_j(x)$ are exactly the eigenvalues of $C_j$.  Applying the Gershgorin circle theorem, we may bound the eigenvalues of $C_j$, hence the roots of $P_j$ which are exactly the roots of $\tilde\kappa_j$:
\[
	\max\{|r|:\tilde\kappa_j(r)=0\}\leq \sqrt{\frac{j-2}{2}}+\sqrt{\frac{j-1}{2}}\leq\sqrt{2j-2}.
\]
\end{proof}


\begin{lemma}
\label{lemma:boundClassical}
For all $h$ sufficiently small, $|\psi_j(\pm\sqrt{E_j})|\geq \frac{D_j}{\sqrt h}\geq 1$.
\end{lemma}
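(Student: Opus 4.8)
The plan is to pass to the rescaled operator of Lemma~\ref{lemma:L2conv} and to strengthen the $L^2$ convergence proved there to locally uniform convergence near the turning point, which is the real content needed here. Recall that $\psi_j=U\tilde\psi_j$ with $(U\phi)(x)=h^{-1/2}\phi(h^{-1/2}x)$ and that $E_j=h\tilde E_j$; since $h^{-1/2}\sqrt{E_j}=\sqrt{\tilde E_j}$, this gives
\[
	\psi_j(\pm\sqrt{E_j})=h^{-1/2}\,\tilde\psi_j\!\big(\pm\sqrt{\tilde E_j}\big).
\]
Because $\tilde E_j=2j+1+O(h^\infty)$, the evaluation point $\pm\sqrt{\tilde E_j}$ converges to $\pm\sqrt{2j+1}$ as $h\to0$. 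Hence it suffices to produce $D_j>0$ with $|\tilde\psi_j(\pm\sqrt{\tilde E_j})|\geq D_j$ for all small $h$; multiplying by $h^{-1/2}$ then gives $|\psi_j(\pm\sqrt{E_j})|\geq D_j/\sqrt h$, and $D_j/\sqrt h\geq1$ once $h$ is small enough.

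First I would pin down the limiting value. By Lemma~\ref{lemma:boundHermite} every zero $r$ of $\tilde\kappa_j$ satisfies $|r|\leq\sqrt{2j-2}<\sqrt{2j+1}$ (the inequality being vacuous for $j=0$, where $\tilde\kappa_0$ never vanishes). Therefore $\pm\sqrt{2j+1}$ are not zeros of $\tilde\kappa_j$, and, using the even/odd symmetry of the Hermite functions, $|\tilde\kappa_j(\sqrt{2j+1})|=|\tilde\kappa_j(-\sqrt{2j+1})|=:2D_j>0$.

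The crux is the upgrade from $L^2$ to uniform convergence on a fixed neighborhood of the turning points. Fix $\sqrt{2j+1}<b<B$ and put $I=(-B,B)$, $I'=(-b,b)$, so that $\pm\sqrt{2j+1}\in I'$ and $\overline{I'}\subset I$. Since $\supp\gamma$ is a compact subset of $\R\setminus\{0\}$, say $\supp\gamma\subseteq\{|y|\geq\rho\}$ with $\rho>0$, the support of $x\mapsto\gamma(\sqrt h x)$ lies in $\{|x|\geq\rho h^{-1/2}\}$, which is disjoint from $\overline I$ once $h<(\rho/B)^2$; for such $h$ the eigenfunction $\tilde\psi_j$ solves the \emph{unperturbed} equation $-\tilde\psi_j''+x^2\tilde\psi_j=\tilde E_j\tilde\psi_j$ on $I$. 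Since $\tilde\psi_j\in L^2(\R)$ with $\|\tilde\psi_j\|_{L^2}\to1$ by Lemma~\ref{lemma:L2conv}, and the coefficient $x^2-\tilde E_j$ is bounded on $\overline I$ uniformly in $h$, interior elliptic estimates (equivalently, a variation-of-parameters bound for the ODE) give $\|\tilde\psi_j\|_{H^2(I')}\leq C$ with $C$ independent of $h$, hence a uniform $C^1(\overline{I'})$ bound by Sobolev embedding. By the Arzel\`a--Ascoli theorem, every sequence $h_k\to0$ has a subsequence along which $\tilde\psi_j\to g$ uniformly on $\overline{I'}$; the limit $g$ agrees a.e.\ with $\tilde\kappa_j$ by the $L^2$ convergence of Lemma~\ref{lemma:L2conv}, and, both being continuous, $g=\tilde\kappa_j$. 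As the limit does not depend on the subsequence, $\tilde\psi_j\to\tilde\kappa_j$ uniformly on $\overline{I'}$.

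To conclude, uniform convergence on $\overline{I'}$ together with $\sqrt{\tilde E_j}\to\sqrt{2j+1}$ and the continuity of $\tilde\kappa_j$ yield $\tilde\psi_j(\pm\sqrt{\tilde E_j})\to\tilde\kappa_j(\pm\sqrt{2j+1})$, so $|\tilde\psi_j(\pm\sqrt{\tilde E_j})|\geq D_j$ for all small $h$; the reduction of the first paragraph then finishes the proof. I expect the elliptic bootstrap producing convergence with constants \emph{uniform in $h$} to be the only genuine obstacle; the rest is bookkeeping with the rescaling and Lemmas~\ref{lemma:L2conv}--\ref{lemma:boundHermite}. It is worth noting that one really does need locally uniform convergence, not merely the $L^2$ statement of Lemma~\ref{lemma:L2conv}, because the evaluation point $\sqrt{\tilde E_j}$ itself drifts with $h$.
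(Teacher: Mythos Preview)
Your proof is correct and follows the same architecture as the paper's: rescale to $\tilde\psi_j$, upgrade the $L^2$ convergence of Lemma~\ref{lemma:L2conv} to locally uniform convergence near $\pm\sqrt{2j+1}$, and invoke Lemma~\ref{lemma:boundHermite} to ensure the limiting value $\tilde\kappa_j(\pm\sqrt{2j+1})$ is nonzero. The only tactical difference is in the upgrade step. You argue by compactness: observe that $\gamma(\sqrt h\,\cdot)$ vanishes on any fixed bounded interval for small $h$, obtain a uniform $H^2$ bound from the unperturbed ODE, pass through Sobolev embedding and Arzel\`a--Ascoli, and identify the limit via the $L^2$ statement. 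The paper instead proves convergence directly: it sets $\theta=\tilde\psi_j-\tilde\kappa_j$, reads off $\theta''\to0$ in $L^2_{\mathrm{loc}}$ from the two eigenvalue equations, uses the identity $\int\eta(\theta')^2=\tfrac12\int\eta''\theta^2-\int\eta\theta\theta''$ to get $\theta'\to0$ in $L^2_{\mathrm{loc}}$, and then the fundamental theorem of calculus for uniform convergence. Both routes are standard and of comparable length; yours leans on soft functional-analytic machinery while the paper's is more self-contained. One cosmetic remark: $\|\tilde\psi_j\|_{L^2}=1$ identically, not merely in the limit, so your appeal to Lemma~\ref{lemma:L2conv} for that bound is unnecessary.
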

\begin{proof}
We will first get the bound on the rescaled eigenfunction $\tilde\psi_j$, then transfer this bound to $\psi_j$.  Let $\theta=\tilde\psi_j-\tilde\kappa_j$.  Lemma \ref{lemma:L2conv} establishes that $\theta\to 0$ in $L^2(\R)$.  Now, let $I\subseteq \R$ be any bounded interval.  Then $\theta''\to 0$ in $L^2(I)$, since 
\begin{align*}
	\norm{\theta''}_{L^2(I)}
	&\leq\norm{h^{-1}\gamma(\sqrt{h}x)\tilde\psi_j(x)}_{L^2(\R)}+\norm{x^2(\tilde\psi_j(x)-\tilde\kappa_j(x))}_{L^2(I)}+\norm{\tilde E_j\tilde\psi_j(x)-\tilde\kappa_j(x)}_{L^2(\R)},
\end{align*}
where the first term vanishes by Proposition \ref{prop:eigBound}.(i) and that $\lim_{h\to 0}\inf \supp(\gamma(\sqrt h x))=\infty$, the second term vanishes due to Lemma \ref{lemma:L2conv}, and the third term vanishes due to lemma \ref{lemma:L2conv} and since $\lim_{h\to 0}\tilde E_j=2j-1$.

Let $\eta\in C_0^\infty(\R)$ and consider the following identity: $\eta(\theta')^2+\eta\theta\theta''-\frac{1}{2}\theta^2\eta''=\frac{d}{dx}(\eta\theta\theta'-\frac{1}{2}\theta^2\eta')$. Integrating yields
\[
	\int\eta(\theta')^2\,dx=\frac{1}{2}\int\eta''\theta^2\,dx-\int\eta\theta\theta''\,dx,
\]
which tends to 0 as $h\to 0$ since $\theta,\theta''\to 0$ in $L^2(I)$, thus $\theta'\to 0$ in $L^2(I)$.  Now let $x\in [b,c]=I$ .  Then, for each $a<b$, by the fundamental theorem of calculus
\[
	|\theta(x)-\theta(a)|\leq||\theta'||_{L^2([a,c])},
\]
which tends to zero as $h\to 0$.  Since $\lim_{a\to-\infty}\theta(a)=0$, we must have that $\lim_{h\to 0}||\theta(x)||_{L^\infty(I)}=0$, or equivalently $\tilde\psi_j\to\tilde\kappa_j$ in $L^\infty(I)$.

Now, we pass to the unscaled semiclassical operator.  Working in the allowed region $C=\left[-\sqrt{E_j},\sqrt{E_j}\right]$: 
\begin{align*}
	\norm{\psi_j(x)-h^{-\frac{1}{2}}\tilde\kappa_j(h^{-\frac{1}{2}}x)}_{L^\infty(C)}
	&=h^{-\frac{1}{2}}\norm{\tilde\psi_j(x)-\tilde\kappa_j(x)}_{L^\infty\left(\left[-\sqrt{\tilde E_j},\sqrt{\tilde E_j}\right]\right)}.
\end{align*}
Multiplying the above equation by $\sqrt{h}$, and recalling that $\sqrt{\tilde E_j}=\sqrt{2j+1}+O(h^\infty)$ yields
\begin{align*}
	\lim_{h\to 0}\norm{\sqrt{h}\psi_j(x)-\tilde\kappa_j(h^{-\frac{1}{2}}x)}_{L^\infty(C)}
	&=0.
\end{align*}
Lemma \ref{lemma:boundHermite} establishes that $|\tilde\kappa_j(\pm\sqrt{2j+1})|>0$, and since $\sqrt{E_j}=\sqrt{h(2j+1)}+O(h^\infty)$ for all $h$ sufficiently small there is a constant $D_j>0$ satisfying, 
\[
	|\tilde\kappa_j(\pm h^{-\frac{1}{2}}\sqrt{E_j})|\geq D_j.
\]
Thus by the above limit, for sufficiently small $h$, $|\psi_j(\pm\sqrt{E_j})|\geq\frac{D_j}{\sqrt h}$.

\end{proof}
The final lemma that we require gives us a way to transfer the lower bound of Lemma \ref{lemma:boundClassical} outward from the boundary of the  allowed region $[-\sqrt{E_j},\sqrt{E_j}]$.

\begin{lemma}
\label{lemma:boundEndpoint}
Consider the interval $I=[x_1,x_2(1+\epsilon)]$ with $\sqrt{E_j}\leq x_1$.  Let $v=\sup_{t\in I}\sqrt{V(t)}$.  Then 
\[
	|\psi_j(x_2)|\geq e^{-\frac{v}{h}(x_2-x_1)}\left(1-e^{-2\epsilon\frac{v}{h}x_2}\right)|\psi_j(x_1)|.
\]
\end{lemma}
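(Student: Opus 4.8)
The plan is to compare $\psi_j$ on $I$ with an explicit constant‑coefficient barrier via a maximum principle. Write $\mu=v/h$ and let $x_3=x_2(1+\epsilon)$ be the right endpoint of $I$, and put $u=\psi_j$. Since $\psi_j$ has constant sign throughout the classically forbidden region $[\sqrt{E_j},\infty)\supseteq I$ — there $\psi_j''=(V-E_j)h^{-2}\psi_j$ is convex where $\psi_j>0$ and concave where $\psi_j<0$, so a sign change would force $|\psi_j|\to\infty$, contradicting $\psi_j\in L^2$ — we may assume without loss of generality that $u>0$ on $I$. The eigenvalue equation gives $u''=\frac{V-E_j}{h^2}u$, and because $V(t)\le v^2$ on $I$ while $E_j>0$ for $h$ small, this yields the differential inequality $u''-\mu^2u\le 0$ on $I$.

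Next I would introduce the barrier $\phi(x)=\sinh(\mu(x_3-x))$, which solves $\phi''=\mu^2\phi$ exactly, is positive on $[x_1,x_3)$, and vanishes at $x_3$; the point of making $\phi$ vanish at the right endpoint of $I$ (rather than using a pure exponential) is precisely what produces the factor $1-e^{-2\epsilon v x_2/h}$ in the statement. Consider $w=u-\lambda\phi$ with $\lambda=u(x_1)/\phi(x_1)>0$ (well defined since $x_1<x_3$). Then $w(x_1)=0$, $w(x_3)=u(x_3)>0$, and a direct computation gives $-w''+\mu^2w=-u''+\mu^2u\ge 0$ on $(x_1,x_3)$. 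The operator $-\frac{d^2}{dx^2}+\mu^2$ satisfies the minimum principle — an interior negative minimum $w(x_0)<0$ would force $-w''(x_0)+\mu^2w(x_0)<0$ — so $w\ge 0$ on $I$, i.e. $u(x)\ge \frac{u(x_1)}{\phi(x_1)}\phi(x)$ for all $x\in I$.

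Evaluating at $x=x_2$ and using $x_3-x_2=\epsilon x_2$ and $x_3-x_1=(x_2-x_1)+\epsilon x_2$ gives
\[
	u(x_2)\ \ge\ \frac{\sinh(\mu\epsilon x_2)}{\sinh(\mu(x_2-x_1)+\mu\epsilon x_2)}\,u(x_1).
\]
The proof then closes with the elementary estimate $\frac{\sinh a}{\sinh(a+s)}\ge\frac{e^{a}-e^{-a}}{e^{a+s}}=e^{-s}(1-e^{-2a})$, valid for $a,s\ge 0$, applied with $a=\mu\epsilon x_2=\epsilon v x_2/h$ and $s=\mu(x_2-x_1)=v(x_2-x_1)/h$; since $|\psi_j|=u$ on $I$ this is exactly the claimed bound.

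I expect the only genuinely delicate point to be the justification that $\psi_j$ has constant sign on $I$ — equivalently, that $[x_1,x_3]$ lies in a single component of the forbidden region $\{V>E_j\}$ on which $\psi_j$ has no zero (automatic for the single‑well potentials used later, where $V(x)\ge x^2\ge E_j$ on $[\sqrt{E_j},\infty)$). Everything after that reduction is the maximum principle for $-\frac{d^2}{dx^2}+\mu^2$ together with the one‑line hyperbolic estimate above.
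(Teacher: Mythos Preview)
Your proof is correct and follows essentially the same approach as the paper: both compare $|\psi_j|$ on $I$ with a barrier solving $\phi''=\mu^2\phi$ and vanishing at the right endpoint $x_3=x_2(1+\epsilon)$, then apply a maximum/minimum principle. The paper's barrier $\phi(x)=\bigl(e^{-\mu(x-x_1)}-e^{-2\mu(x_3-x_1)}e^{\mu(x-x_1)}\bigr)|\psi_j(x_1)|$ is, up to a constant, exactly your $\sinh(\mu(x_3-x))$; your normalization (matching at $x_1$) followed by the estimate $\sinh a/\sinh(a+s)\ge e^{-s}(1-e^{-2a})$ recovers the same inequality the paper reads off directly, and your explicit justification that $\psi_j$ has constant sign on $I$ is a point the paper leaves implicit.
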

\begin{proof}
Define $\phi(x)=\left(e^{-\frac{v}{h} (x-x_1)}-e^{-2\frac{v}{h} (x_2(1+\epsilon)-x_1)}e^{\frac{v}{h}(x-x_1)}\right)|\psi_j(x_1)|$.  Notice that $\phi(x_2(1+\epsilon))=0$, and 
\[
	\phi(x_1)=(1-e^{-2\frac{v}{h} (x_2(1+\epsilon)-x_1)})|\psi_j(x_1)|\leq|\psi_j(x_1)|.
\]
Thus, on $\partial I$, $\phi\leq|\psi_j|$.  We claim that on the interior of $I$, $\phi\leq|\psi_j|$.  Let $\eta =|\psi_j|-\phi$ and define $I_0=\{x:\eta(x)<0\}\subseteq I$.  Now, since $\eta\geq0$ on $\partial I$, $\eta=0$ on $\partial I_0$.  Let $W(x)=V(x)-E_j$ and note by choice of lower bound on $I$ we have that $W\geq 0$ on $I$. Then on $I_0$:
\[
	\frac{d^2}{dx^2}\eta
	=W\eta-\left(\left(\frac{v}{h}\right)^2-W\right)\phi,
\]
which is less than zero for all $h$ sufficiently small.  Thus, $\eta$ is concave on $I_0$, and so it attains is minimum on $\partial I_0$.  So $\eta\geq 0$ on $I_0$, and $I_0$ is  empty, i.e. $\phi\leq|\psi_j|$ on $I$.  With this, evaluating $\phi$ at $x_2$ yields $|\psi_j(x_2)|\geq e^{-\frac{v}{h} (x_2-x_1)}(1-e^{-2\epsilon\frac{v}{h} x_2})|\psi_j(x_{1})|$.
\end{proof}

Finally, we are able to prove Proposition \ref{prop:eigBound}.(ii), which will involve propagating the bound from Lemma \ref{lemma:boundClassical}, using Lemma \ref{lemma:boundEndpoint}.
\begin{proof}[Proof of Proposition \ref{prop:eigBound}.(ii)]
Let $\sqrt{E_j}<|x|<R$, and without loss of generality, assume $x$ is positive.  Let $0=x_0<x_1=\sqrt{E_j}<x_2<\cdots<x_n=x$ be a partition of $[0,x]$ such that the upper Darboux sum of $\int_0^x\sqrt{V(x)}\,dx$ satisfies
\[
	\sum_{i=1}^n\sup_{t\in[x_{i-1},x_i]}\sqrt{V(t)}|x_i-x_{i-1}|\leq \left(1+\frac{\delta}{2}\right)\int_0^x\sqrt{V(t)}\,dt.
\]
for some $\delta>0$.  Let $\epsilon>0$ be small enough such that if we define $D_i=[x_{i-1},x_i(1+\epsilon))]$, and $v_i=\sup_{t\in D_i}\sqrt{V(t)}$, we get
\begin{equation}
\label{eqn:extdSum}
	\sum_{i=1}^nv_i|x_i-x_{i-1}|\leq \left(1+\delta\right)\int_0^x\sqrt{V(t)}\,dt.
\end{equation}
Next, applying Lemma \ref{lemma:boundEndpoint} iteratively on this partition, we get that 
\begin{align*}
	|\psi_j(x)|
	&\geq e^{-\sum_{i=2}^n\frac{v_i}{h}(x_i-x_{i-1})}\prod_{i=2}^n\left(1-e^{-2\epsilon\frac{v_i}{h}x_i}\right)|\psi_j(x_1)|.
\end{align*}
Lemma \ref{lemma:boundClassical} guarantees (for $h>0$ sufficiently small), $|\psi_j(x_1)|=|\psi_j(\sqrt{E_j})|\geq 1$, so we recover the full sum
\begin{align*}
	|\psi_j(x)|
	&\geq e^{-\sum_{i=1}^n\frac{v_i}{h}(x_i-x_{i-1})}\prod_{i=1}^n\left(1-e^{-2\epsilon\frac{v_i}{h}x_i}\right).
\end{align*}
Since $v_i=O(1)$ for $i\neq1$, and $v_1=O(\sqrt h)$, for all $h$ sufficiently small, we estimate $\prod_{i=1}^n\left(1-e^{-2\epsilon\frac{v_i}{h}x_i}\right)\geq2^{-n}$.  Using this along with bound (\ref{eqn:extdSum}), we get the final desired bound:
\begin{align*}
	|\psi_j(x)|
	&\geq De^{-(1+\delta)^2\frac{1}{h}\left|\int_0^x\sqrt{V}\,dt\right|}.
\end{align*}
 To get uniformity in $r<|x|<R$, extend the partition.
\end{proof}

\section{Proof of the Main Result}
To prove Theorem \ref{theorem:mainResult} we will construct the desired potential functions $V^\pm$.  Let $\alpha,\beta\in C_0^\infty(\R)$ with $\supp(\alpha)\subseteq(1,2)$ and $\supp(\beta)\subseteq(3,4)$ and set $V^\pm(x)=x^2+\alpha(x)+\beta(\pm x)$.   To recover the sub $O(h^\infty)$ differences between the eigenvalues of the associated operators $H^\pm=-h^2\frac{d^2}{dx^2}+x^2+\alpha(x)+\beta(\pm x)$, we will apply a variation in $\beta(\pm x)$. Define the following family of operators
\begin{align*}
	H_t^\pm&=-h^2\frac{d^2}{dx^2}+x^2+\alpha(x)+t\beta(\pm x),
\end{align*}
and notice that $H_1^\pm = H^\pm$.  Denote the corresponding eigenfunctions and eigenvectors $\psi_j^\pm(t,x),\,E_j^\pm(t)$.

\begin{proof}[Proof of Theorem \ref{theorem:mainResult}]
The following equation, known as Hadamard's variational formula, will be useful
\[
	\frac{d}{dt}E_j^\pm(t)=\int\beta(\pm x)(\psi_j^\pm(t,x))^2\,dx.
\]
The proof of this formula for the one dimensional case is elementary, see \cite{guillemin2012hezari}. Now, with Hadamard's variational formula and the fundamental theorem of calculus, we have
\[
	E_j^\pm(t)-E_j^\pm(0)
	=\int_0^t\int\beta(\pm x)(\psi_j^\pm(s,x))^2\,dx\,ds.
\]
Using Proposition \ref{prop:eigBound} and the above equation, and that $E_j^\pm(1)=E_j^\pm$ gives us the following bounds:
\begin{equation}
\label{eqn:lowEigDiffBd}
	E_j^\pm-E_j^\pm(0) 
	\geq\int_0^1\int\beta(\pm x)De^{-\frac{(1+\delta)^2}{h}\left|\int_0^x\sqrt{a^2+\alpha(a)+s\beta(\pm a)}\,da\right|}\,dx\,ds
\end{equation}
\begin{equation}
\label{eqn:upEigDiffBd}
	E_j^\pm-E_j^\pm(0) 
	\leq\int_0^1\int\beta(\pm x)Ce^{-\frac{(1-\delta)^2}{h}\left|\int_0^x\sqrt{a^2+\alpha(a)+s\beta(\pm a)}\,da\right|}\,dx\,ds
\end{equation}
Notice that the minus version of lower bound (\ref{eqn:lowEigDiffBd}) avoids integrating in the exponent through the support of $\alpha$.  This allows us to express the difference of the minus version of (\ref{eqn:lowEigDiffBd}) with the plus version of (\ref{eqn:upEigDiffBd}) as
\begin{align*}
	E_j^--E_j^+
	&\geq\int_0^1\int\beta(x)\left(De^{-\frac{(1+\delta)^2}{h}\int_0^x\sqrt{a^2+s\beta(a)}\,da}-Ce^{-\frac{(1-\delta)^2}{h}\int_0^x\sqrt{a^2+\alpha(a)+s\beta(a)}\,da}\right)\,dxds
\end{align*}
Factoring, recalling that $\beta$ is supported in $(3,4)$, and recognizing that the integrals in the exponent are monotonic in $x$ gives
\begin{align*}
	E_j^--E_j^+
	&\geq\int_0^1 e^{-\frac{(1+\delta)^2}{h}\int_0^4\sqrt{a^2+s\beta(a)}\,da}\int\beta(x)\left(D-Ce^{-\frac{1}{h}\int_0^x(1-\delta)^2\sqrt{a^2+\alpha(a)+s\beta(a)}-(1+\delta)^2\sqrt{a^2+s\beta(a)}\,da}\right)\,dxds.
\end{align*}
The bounds of Proposition \ref{prop:eigBound} hold for any choice of $\delta$, so we may take $\delta$ small enough to satisfy for each $x\in (3,4)$
\[
	\int_0^x(1-\delta)^2\sqrt{a^2+\alpha(a)+s\beta(a)}-(1+\delta)^2\sqrt{a^2+s\beta(a)}\,da>0.
\]
With this choice of delta, and again using monotonicity of the integral in the exponent, we have
\begin{align*}
	E_j^--E_j^+
	&\geq e^{-\frac{(1+\delta)^2}{h}\int_0^4\sqrt{a^2+\beta(a)}\,da}\left(D-Ce^{-\frac{1}{h}\int_0^3(1-\delta)^2\sqrt{a^2+\alpha(a)}-(1+\delta)^2\sqrt{a^2}\,da}\right)\int\beta(x)\,dx.
\end{align*}
For all $h$ sufficiently small, this then reduces to the inequality $E_j^--E_j^+\geq De^\frac{-d}{h}$, for some positive constants $c,C>0$.  To prove the upper bound $E_j^--E_j^+\leq Ce^\frac{-c}{h}$, a similar argument is used where we instead take the difference of the minus version of (\ref{eqn:upEigDiffBd}) with the positive version of (\ref{eqn:lowEigDiffBd}).
\end{proof}

\bibliographystyle{alpha}
\bibliography{refs}
\end{document}